\newcommand{\nosemic}{\SetEndCharOfAlgoLine{\relax}}
\newtheorem{theorem}{Theorem}[section]
\newtheorem{corollary}{Corollary}[theorem]
\theoremstyle{remark}
\newtheorem*{property}{Property}
\theoremstyle{remark}
\newcommand{\state}{x}
\newcommand{\stateRV}{X}
\newcommand{\action}{a}
\newcommand{\actionRV}{A}
\newcommand{\attime}[2]{#1_{#2}}
\newcommand{\attimes}[2]{#1_{[#2]}}
\newcommand{\timedstates}[1]{\attimes{\state}{#1}}
\newcommand{\timedstate}[1]{\attime{\state}{#1}}
\newcommand{\timedactions}[2][\state]{\attimes{\action}{#2}}
\newcommand{\timedaction}[2][\state]{\attime{\action}{#2}}
\newcommand{\timedStates}[1]{\attimes{\stateRV}{#1}}
\newcommand{\timedActions}[2][\state]{\attimes{\actionRV}{#2}}
\newcommand{\timedAction}[2][\state]{\attime{\actionRV}{#2}}
\newcommand{\softmax}{\text{soft}\hspace{-0.3em}\max}
\title{Maximum Likelihood Constraint Inference from Stochastic Demonstrations}
\author{David L. McPherson, Kaylene C. Stocking, S. Shankar Sastry}
\date{\today}
\begin{document}
\maketitle

\begin{abstract}
  When an expert operates a perilous dynamic system, ideal constraint information is tacitly contained in their demonstrated trajectories and controls.
The likelihood of these demonstrations can be computed, given the system dynamics and task objective, and the maximum likelihood constraints can be identified.
Prior constraint inference work has focused mainly on deterministic models.
Stochastic models, however, can capture the uncertainty and risk tolerance that are often present in real systems of interest.

This paper extends maximum likelihood constraint inference to stochastic applications by using maximum causal entropy likelihoods.
Furthermore, we propose an efficient algorithm that computes constraint likelihood and risk tolerance in a unified Bellman backup, allowing us to generalize to stochastic systems without increasing computational complexity.


\end{abstract}

\section{Introduction}
Optimization-based control (notably, model predictive control) promises intelligent behavior \cite{aswani2013provably} even in challenging nonlinear dynamics \cite{ng2003autonomous} and even stochastic dynamics \cite{williams2016aggressive}\cite{broek2012risk}.
It's already impacted industrial practice for decades \cite{qin2003survey} through model-predictive control, and its recent incarnation as ``deep reinforcement learning'' \cite{ng2003autonomous}\cite{sunderhauf2018limits} has been promising to revolutionize control again.

Yet these optimizations must always begin with a fundamental question: what should the automation optimize?
Answering this question is notoriously difficult \cite{amodei2016concrete} and lies at the core of the so-called "Value Alignment" problem of AI-safety.

One solution approach is to solve the inverse of optimal control: given near-optimal demonstrations, recover the metric the demonstrator is optimizing \cite{kalmanInverseOptimalControl}
After fitting the task specification the objective can be optimized to imitate the expert behavior \cite{abbeel2004apprenticeship}.
or used to predict human motion \cite{ziebart2009human}.

Often, inverse optimal control focuses on inferring the objective metric that is being continuously optimized.
However, in parallel to constrained optimal control for safety, algorithms are being extended to infer those hard constraints \cite{armesto2017efficient}\cite{chou2020learning}\cite{li2016learning}\cite{perez2017c}\cite{scobee2019maximum}.
Chou et al. \cite{chou2018learning} inferred constraints along the paths that would be low cost but were never observed.
This intuition was formalized by Scobee et al. \cite{scobee2019maximum} by translating maximum entropy inverse reinforcement learning \cite{ziebart2008maximum} to work for hard constraints.
Unfortunately, the maximum entropy used in \cite{scobee2019maximum} only works for deterministic systems.


Non-deterministic dynamics reflect the uncertainty that exacerbates the safety risks beyond merely avoiding bad controls to actively combating diffusion into unsafety.
Stochasticity can model a variety of unpredictable dynamics in applications: from unpredictable power sources in renewable power systems \cite{khodayar2013enhancing} to
hard-to-model turbulence in road conditions \cite{williams2016aggressive}, from tumor cell growth in cancer treatment \cite{risom2018differentiation} to unforeseen changes in stormwater reservoirs \cite{chapman2019risk}.

The maximum entropy likelihoods can be extended to uncertain transition dynamics by conditioning the entropy at each time step only on the previously revealed state transitions
\cite{ziebart2010modeling}.
This maximum causal entropy has been applied to learn general non-Markovian specifications \cite{vazquez2018learning}.
A subclass of non-Markovian specifications are state-action constraints that proscribe that a configuration or control should never be taken.
This paper specializes the inference to just this sub-class paralleling the approach of \cite{scobee2019maximum} but applying to causal entropy in stochastic dynamics.


\subsection{Contributions and Guide}
This work advances prior art \cite{scobee2019maximum} in inferring state-action constraints:

\begin{itemize}
  \item by respecting causality using the principle of maximum \textbf{causal} entropy for likelihood generative models
  \item by extending the hypothesis family to include risk-tolerating \textbf{chance} constraints
  \item and by streamlining the algorithm into one backwards pass, thereby maintaining the same computational complexity as the non-stochastic version \cite{scobee2019maximum}
\end{itemize}

\section{Background}
\subsection{Maximum Likelihood Infererence}
The inference task is to fit parameters to observed demonstration trajectories.
We follow the maximum likelihood inference framework that models data as samples of a parametrized distribution.
In our case, the distribution can be derived from the task description  and agent model. These will be described in subsections \ref{sec:MDP} and \ref{sec:Boltzmann}, respectively.
The parameters to fit will be which candidate constraints $C \in \mathcal{C}$ are effecting the expert and at what risk tolerance $\alpha_C$. These chance constraints will be described in subsection \ref{sec:ChanceConstraints}.

\subsection{Markov Decision Processes}
\label{sec:MDP}
In general, a stochastic process is any indexed collection of random variables.
For discrete time processes, these variables are indexed by integers $t \in [0:T]$.
A discrete-time Markov process $X_t \in \mathcal{X}, \quad t \in [0:T]$ is one with the discrete-time Markov property:

\begin{align}
  P(X_t | X_{t-1}, X_{t-2}, \dots, X_0) = P(X_t | X_{t-1}), \forall t>0
\end{align}

The Markov property implies that conditioning on an earlier indexed state contains all the past information before it that is relevant to determining the current state.

A Markov decision process parametrizes these Markovian transitions by a series of chosen actions $a_t \in \mathcal{A}$.
This conditional transition distribution can therefore (assuming that the state space $\mathcal{X}$ is a measurable space) be captured by a probability density function parametrized by previous state $x_{t-1}$ and chosen action $a_{t-1}$:

\begin{align}
  P_{a_{t-1}}(X_t = x_t | X_{t-1} = x_{t-1}) = S(x_{t-1}, a_{t-1}, x_{t})
  \label{eq:MDPtransitionDef}
\end{align}

These actions are then evaluated by a metric on the state and action sequences into some scalar reward or penalty, $R(x_{0:T},a_{0:T-1})$:

\begin{align}
  R(\timedstates{0:T},\timedactions[\cdot]{0:T-1}) = \sum_{t=0}^{T-1} r(x_{t},a_{t}) + w(x_{T})
  \label{eq:mdp_reward}
\end{align}
where $r$ is the running cost and $w$ is the final cost.

Bundling these four objects of the:
\begin{itemize}
  \item state space $\mathcal{X}$,
  \item set of actions $\mathcal{A}$,
  \item transition distribution function $P_{a_{t-1}}(x_t | x_{t-1})$,
  \item and objective metric $R(\timedstates{0:T},\timedactions[\cdot]{0:T-1})$
\end{itemize}
makes up a 4-tuple that defines the Markov Decision Process.

This process' distribution on reward outcomes induced by certain action choices becomes the foundation of stochastic control problems in discrete spaces.
This paper models the expert's performance as operating in this discrete space under additional constraints. We describe these additional constraints next.

\subsection{Chance Constraints}
\label{sec:ChanceConstraints}
The agent must also choose its actions to avoid dangerous states $x \in C_X$.
To model some risk-tolerance, we allow some small probability $\psi(x)$ of transitioning to an $x \in C_X$:

\begin{align}
  P(X_{t+1} = x | X_t = x_t, a_t) \leq \psi(x), \quad \forall x \in C_X
  \label{eq:transitionChanceConstraint}
\end{align}

To deterministically constrain out a state $x$ set $\psi(x) = 0$.
On the other hand, setting $\psi(x) = 1$ means the constraint is inactive and transitioning to $x$ is freely allowed.
Therefore the set of state constraints $C_X$ can be encoded as a $\psi(x)$ over all states $x \in \mathcal{X}$.

There can also be constraints on action that rule out illegal actions $a \in C_A$. Since the stochastic dynamics only make states uncertain, we only need chance constraints on the states and not on the actions.

Let $C$ be the tuple $(\psi(x),C_A)$, and call the set of all these constraint candidates $\mathcal{C}$, so $C \in \mathcal{C}$.

Let the set $W_{C}^{t}(x_t)$ be the set of actions $a_t$ that satisfy $C_A$ from $x_t$ and generate state transitions that satisfy Eq. (\ref{eq:transitionChanceConstraint}).
And let $\Phi_C^t(a,x)$ be its indicator: the indicator of whether Eq. (\ref{eq:transitionChanceConstraint}) and $a \notin C_A$ are satisfied:

\begin{align}
  \Phi_C^t&(a,x) = \nonumber \\
  = \mathbb{I}
    [
      &a \notin C_A \nonumber\\
      &\cap
      (P(X_{t+1} = \hat{x} | X_t = x, a) \leq \psi(\hat{x}) \forall \hat{x} \in C_X)
    ]
    \label{eq:chanceConstraintIndicatorDef}
\end{align}

\subsection{Boltzmann Distribution on Controller Sequences}
\label{sec:Boltzmann}
All the probabilities in the previous section were based purely on process noise that made $X_t$ random given a deterministic choice of controllers $\timedaction[\cdot]{t}$ for $t \in [0,T]$.
The chance constraints outlaw certain induced distributions over $X_t$ given $\timedaction[\cdot]{t-1}$.

Inside these chance constraints, the agent chooses actions.
There are many possible ways an agent might choose their actions.
We assume they are endeavouring to optimize the reward function $\sum_t r(X_t,a_t) + w(X_T)$ defined in Eq. (\ref{eq:mdp_reward}) and assume nothing else.
The Maximum Entropy method optimizes the generative model distribution $P_C(a | x)$ to fit this known constraint-set $C$ and leave all other facets maximally agnostic \cite{ziebart2010modeling}.
This distribution layers on another layer of stochasticity purely on the level of selecting controller sequences $\timedAction[\cdot]{t} \quad \forall t \in [0:T]$.
Ziebart \cite[p.~74]{ziebart2010modeling} finds the maximal causal entropy distribution to be compactly defined by a backwards iteration with close analogues to the Bellman backup only with the $\max$ exchanged for a differentiable approximation $\softmax$:

\begin{align}
  P_C(a_t | x_t)
  &= \frac{
      e^{
        Q_{C,t}^{soft}(
          \timedaction[\cdot]{t},
          \attime{x}{t}
        )
      }
    }{
      e^{
        V_{C,t}^{soft}(
          \attime{x}{t}
        )
      }
    }
    \Phi_C^t(a,x)
    \label{eq:PBellmanDef}
  \\
  Q_{C,t}^{soft}(\timedaction[\cdot]{t}, \attime{x}{t})
    &= r(x_t,a_t) 
       + \mathbb{E}_{X_{t+1}} V_{C,t+1}^{soft}(\attime{x}{t+1}) \label{eq:Qdef}
  \\
  V_{C,t}^{soft}(\attime{x}{t})
    &= \log \sum_{a_t \in W_C^t}
      e^{
        Q_{C,t}^{soft}(
          \timedaction[\cdot]{t},
          \attime{x}{t}
        )
      } \label{eq:Vdef}
    \\
    &= \softmax_{a_t \in W_C^t} Q_{C,t}^{soft}(\timedaction[\cdot]{t}, \attime{x}{t})
\end{align}
where $Q^{soft}$ can be interpreted as a state-action soft-optimal value-to-go and $V^{soft}$ the state's soft-optimal value-to-go.

Note that these parallels are coincindental as these recursions actually derive from tracking normalizing constants on causally normalized distributions.

These ``soft Bellman'' distributions in Equation (\ref{eq:PBellmanDef}) over single-timestep actions $P(a_t | x_t)$ can form a joint distribution over horizon-wide sequences of controllers:

\begin{align}
  P_C(&\timedActions[\cdot]{t:T} = \timedactions[\cdot]{t:T} | X_t = x_t)
  \\
  &=
  \begin{cases}
    \frac{
      e^{
        \mathbb{E}
        \left[
          R(
            \timedStates{t:T},
            \timedactions[\cdot]{t:T}
          )
        \right]
      }
    }{
      e^{
        V_{C,t}^{soft}(
          \timedstate{t}
        )
      }
    }
    &, \text{ if } \timedactions[\cdot]{t:T} \in W_C^{[t:T]}\\
    0 &, \text{ if } \timedactions[\cdot]{t:T} \notin W_C^{[t:T]}
  \end{cases} \label{eq:BoltzmannDef}
\end{align}

\section{Constraint Inference Statistics}
This full-horizon-wide formulation of the action distribution is useful for relating this generative behavior model to the empirical expert demonstrations.
Let the cross product of the $W_{C^+}^{t}(\cdot)$ for all time points will be $W_{C^+}^{[0:T]}$: the set of controller sequences $a_{0:T}(\cdot)$ that satisfy $C_A$ for all possible input states and generate state transitions that satisfy Eq. (\ref{eq:transitionChanceConstraint}) for all time.
The demonstrations must be within $W_{C^+}^{[0:T]}$.
So when inferring constraints, we can instantly rule out any constraint-sets $C^+$ that place the demonstrations outside of its $W_{C^+}^{[0:T]}$.
Amongst the remaining constraint-set candidates $C^+ \in \mathcal{C}$ we choose the one that maximizes the likelihood of observing the demonstrations.
Indeed, across all demonstrations each likelihood rescales by a $F_{C^+}(x_t)$ unique to each candidate constraint:

\begin{align}
  P_{C^+}(&
    \timedActions[\cdot]{t:T}
    = \timedactions[\cdot]{t:T}
    | X_t = x_t
  )
  \\ &=
  \frac{
    e^{
      \mathbb{E}
      \left[
        R(
          \timedStates{t:T},
          \timedactions[\cdot]{t:T}
        )
      \right]
    }
  }{
    e^{
      V_{C^+,t}^{soft}(
        \timedstate{t}
      )
    }
  }
  \\ &=
  \frac{
    e^{
      \mathbb{E}
      \left[
        R(
          \timedStates{t:T},
          \timedactions[\cdot]{t:T}
        )
      \right]
    }
  }{
    e^{
      V_{C^0,t}^{soft}(
        \timedstate{t}
      )
    }
  }
  \frac{
    e^{
      V_{C^0,t}^{soft}(
        \timedstate{t}
      )
    }
  }{
    e^{
      V_{C^+,t}^{soft}(
        \timedstate{t}
      )
    }
  }
  \\ &=
  P_{C^+}(
    \timedActions[\cdot]{t:T}
    = \timedactions[\cdot]{t:T}
    | X_t = x_t
  )
  \frac{1}{F_{C^+,t}(x_t)}
\end{align}
where we've set $F_{C^+,t}(x_t)$ to:

\begin{align}
  F_{C^+,t}(x_t) =
  \frac{
    e^{
      V_{C^+,t}^{soft}(
        \timedstate{t}
      )
    }
  }{
    e^{
      V_{C^0,t}^{soft}(
        \timedstate{t}
      )
    }
  } \label{eq:Fdef}
\end{align}

Therefore the likelihoods can be computed for just one constrained optimal control problem, call it $C^0$, and then readily translated into the likelihoods for all constraints under consideration.
Prior art \cite{scobee2019maximum} bounded the sub-optimality of inferring the constraints incrementally by adding one state to $C_X$ or action to $C_A$ per step.
Amongst all these candidate $C^+$, whichever has the lowest $F_{C^+,0}(x_0)$ (assessed at the start $x_0$) will have the highest likelihood for all the demonstrations and be the maximum likelihood constraint.

\begin{property}
  Consider two candidate constraints $C^+$ and $C^+_{--}$
  that differ only by $C^+_{--}$ having exactly one $\psi(x)$ lower than $C^+$ has.

  $C^+_{--}$ will always have $F_{C^+_{--},0}(x_0) \leq F_{C^+,0}(x_0)$.
\end{property}

\begin{corollary}
  When considering adding a single state $x$ into $C_X$ always choose the lowest possible $\psi(x)$ that doesn't rule out any demonstrations.
  \label{cor:chanceConstraintLevels}
\end{corollary}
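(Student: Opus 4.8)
The plan is to reduce the claim about $F$ to a pointwise comparison of the soft value functions and then establish that comparison by a backward induction in time. Because the definition $F_{C,t}(x) = e^{V_{C,t}^{soft}(x)}/e^{V_{C^0,t}^{soft}(x)}$ in Eq. (\ref{eq:Fdef}) uses the same baseline problem $C^0$ in the denominator for both candidates, and because $x \mapsto e^x$ is strictly increasing, it suffices to prove $V_{C^+_{--},0}^{soft}(x_0) \leq V_{C^+,0}^{soft}(x_0)$. I would in fact prove the stronger statement that $V_{C^+_{--},t}^{soft}(x) \leq V_{C^+,t}^{soft}(x)$ for every state $x$ and every $t \in [0:T]$, from which the desired inequality is just the special case $t=0$, $x=x_0$.

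The first ingredient is a structural observation about feasibility. Since $C^+_{--}$ and $C^+$ share the same $C_A$ and agree on every $\psi$ except for a single state where $C^+_{--}$ is smaller, any action $a$ meeting $P(X_{t+1}=\hat{x}\mid X_t=x,a)\leq \psi_{C^+_{--}}(\hat{x})$ also meets the corresponding inequality with the larger threshold $\psi_{C^+}(\hat{x})$. By the definition of $W_C^t$ via Eq. (\ref{eq:chanceConstraintIndicatorDef}), this gives the containment $W_{C^+_{--}}^t(x) \subseteq W_{C^+}^t(x)$ for all $x$ and $t$: tightening one $\psi$ can only shrink the feasible action sets.

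The induction then runs backward from $t=T$. At the terminal time the soft value reduces to the constraint-independent terminal cost $w(x_T)$, so the base case holds with equality. For the inductive step, assume $V_{C^+_{--},t+1}^{soft} \leq V_{C^+,t+1}^{soft}$ pointwise. Monotonicity of the expectation in Eq. (\ref{eq:Qdef}) immediately yields $Q_{C^+_{--},t}^{soft}(a,x) \leq Q_{C^+,t}^{soft}(a,x)$ for every action and state. I would then exploit the two-fold monotonicity of the log-sum-exp in Eq. (\ref{eq:Vdef}): each summand $e^{Q^{soft}}$ is strictly positive, so restricting the sum from $W_{C^+}^t(x)$ down to the smaller set $W_{C^+_{--}}^t(x)$ can only decrease it, and replacing each $Q_{C^+,t}^{soft}$ by the smaller $Q_{C^+_{--},t}^{soft}$ decreases it further. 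Chaining these two effects and applying the monotone logarithm gives $V_{C^+_{--},t}^{soft}(x) \leq V_{C^+,t}^{soft}(x)$, closing the induction.

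I expect the only real care to be needed in the inductive step, where two distinct effects act inside the \softmax simultaneously: the shrinking domain and the pointwise drop in $Q^{soft}$. The cleanest way to avoid conflating them is to write the comparison as a chain of two inequalities, first lowering the arguments over the common larger index set and then discarding the surplus terms. A secondary point worth flagging is the degenerate case in which the tighter constraint empties some $W_{C^+_{--}}^t(x)$, forcing $V_{C^+_{--},t}^{soft}(x)=\log 0 = -\infty$; the inequality persists in the extended reals, and in any case the candidates of practical interest are exactly those whose feasible sequences still contain the demonstrations, which is precisely the regime Corollary \ref{cor:chanceConstraintLevels} then puts to use.
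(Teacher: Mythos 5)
Your argument is correct, but there is no paper proof to compare it against: the paper asserts the preceding monotonicity Property (that lowering a single $\psi(x)$ can only decrease $F_{C,0}(x_0)$) without any proof, and Corollary \ref{cor:chanceConstraintLevels} is then stated as its immediate consequence. Your backward induction fills that gap soundly. The reduction from $F$ to $V^{soft}$ is valid because Eq. (\ref{eq:Fdef}) has the common denominator $e^{V_{C^0,t}^{soft}(x_t)}$; the containment $W_{C^+_{--}}^t(x)\subseteq W_{C^+}^t(x)$ follows from Eq. (\ref{eq:chanceConstraintIndicatorDef}) because lowering a threshold only tightens the chance constraint; the base case $V_{C,T}^{soft}(x)=w(x)$ is constraint-independent (matching the initialization $Z(T,x)\leftarrow\exp(w(x))$, $F(T,x)\leftarrow 1$ in Algorithm \ref{alg:Fratio}); and your two-stage comparison inside the log-sum-exp, first lowering the exponents over the common larger action set and then discarding the surplus nonnegative terms, correctly separates the two monotonicity effects, with the $-\infty$ degenerate case surviving in the extended reals. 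Two remarks. First, an alternative route stays entirely at the level of $F$: Theorem \ref{thm:dynprog} gives $F_{C^+,t}(x_t)=\mathbb{E}_{a_t\sim P_{C^0}}\bigl[\Phi_{C^+}^t(a_t,x_t)\,e^{\mathbb{E}_{x_{t+1}}\log F_{C^+,t+1}(x_{t+1})}\bigr]$, and since lowering $\psi$ only decreases the indicator $\Phi_{C^+}^t$ pointwise, the same backward induction (base case $F_{C,T}\equiv 1$) yields $F_{C^+_{--},t}\leq F_{C^+,t}$ directly; this version is arguably closer to the paper's spirit because it reuses the paper's own recursion rather than returning to the value functions. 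Second, your closing sentence is doing real logical work and should be made explicit: monotonicity of $F$ alone does not yield the Corollary; one also needs the paper's observation that, among candidates which do not exclude any demonstration, the smallest $F_{C^+,0}(x_0)$ gives the largest demonstration likelihood, while any $\psi(x)$ strictly below a demonstrated transition probability drives the likelihood to zero. That is precisely why the minimization is restricted to feasible risk levels, and Eq. (\ref{eq:riskLevelSelection}) is the constructive form of that restriction.
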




\section{Dynamic Programming}
\label{sec:DynDeriv}

The ratio defined in Equation (\ref{eq:Fdef}) can be computed by modifying the soft Bellman backup defined in Equations (\ref{eq:PBellmanDef}) - (\ref{eq:Vdef}).
This modified backup procedure is described in the below theorem:

\begin{theorem}
  \label{thm:dynprog}
  Let $C^0$ be a set of constraints and $C^+$ be an augmented version of $C^0$ so that $W_{C^+} \subset W_{C^0}$. Then any $F_{C^+,t}(x_t)$ can be computed with the same sums as for the base $C^0$

  \begin{align*}
    F&_{C^+,t}(x_t)
    \\&=
    \mathbb{E}_{
      a_t \sim P_{C^0}
    }
    \left[
      \Phi_{C^+}^t(
        a_t,
        x_t
      )
      e^{
        \mathbb{E}_{
          x_{t+1}
        }
        \log(
          F_{C^+,t+1}(x_{t+1})
        )
      }
    \right]
  \end{align*}
\end{theorem}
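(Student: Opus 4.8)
The plan is to unfold the definition $F_{C^+,t}(x_t) = e^{V_{C^+,t}^{soft}(x_t)} / e^{V_{C^0,t}^{soft}(x_t)}$ using the soft Bellman recursions in Equations (\ref{eq:Qdef})--(\ref{eq:Vdef}), and to reorganize the resulting expression until the base-constraint policy $P_{C^0}(a_t \mid x_t)$ from Equation (\ref{eq:PBellmanDef}) emerges as the sampling distribution. First I would expand the numerator with the log-sum-exp form of $V^{soft}$, so that $e^{V_{C^+,t}^{soft}(x_t)} = \sum_{a_t \in W_{C^+}^t} e^{Q_{C^+,t}^{soft}(a_t,x_t)}$, which puts the whole quantity in terms of the per-action soft $Q$-values.

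The first substantive step is to rewrite this sum over $W_{C^+}^t$ as a sum over the larger admissible set $W_{C^0}^t$. This is where the hypothesis $W_{C^+} \subset W_{C^0}$ is used: since $\Phi_{C^+}^t(a,x)$ is by definition the indicator of admissibility under $C^+$ (Equation (\ref{eq:chanceConstraintIndicatorDef})), and since every action admissible under $C^+$ is also admissible under $C^0$, the restricted sum equals $\sum_{a_t \in W_{C^0}^t} \Phi_{C^+}^t(a_t,x_t)\, e^{Q_{C^+,t}^{soft}(a_t,x_t)}$. This repackaging is precisely what lets the single base problem $C^0$ carry the computation for every candidate $C^+$.

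Next I would eliminate $Q_{C^+}^{soft}$ in favor of $Q_{C^0}^{soft}$. Subtracting the two instances of Equation (\ref{eq:Qdef}) cancels the running cost $r(x_t,a_t)$ and leaves $Q_{C^+,t}^{soft}(a_t,x_t) - Q_{C^0,t}^{soft}(a_t,x_t) = \mathbb{E}_{x_{t+1}}[V_{C^+,t+1}^{soft}(x_{t+1}) - V_{C^0,t+1}^{soft}(x_{t+1})] = \mathbb{E}_{x_{t+1}}[\log F_{C^+,t+1}(x_{t+1})]$, invoking the definition of $F$ at time $t+1$. Exponentiating gives the factorization $e^{Q_{C^+,t}^{soft}} = e^{Q_{C^0,t}^{soft}}\, e^{\mathbb{E}_{x_{t+1}} \log F_{C^+,t+1}(x_{t+1})}$, so after substituting and dividing by $e^{V_{C^0,t}^{soft}(x_t)}$ each summand contains the ratio $e^{Q_{C^0,t}^{soft}(a_t,x_t)} / e^{V_{C^0,t}^{soft}(x_t)}$, which is exactly $P_{C^0}(a_t \mid x_t)$ on $W_{C^0}^t$. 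Reading the sum against this probability as an expectation yields the claimed identity.

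The point to watch — more a bookkeeping subtlety than a deep obstacle — is the placement of $\mathbb{E}_{x_{t+1}}$ inside the exponent. This is a genuine consequence of the causal-entropy formulation rather than a careless interchange: $\mathbb{E}_{x_{t+1}} V^{soft}_{t+1}$ enters $Q^{soft}$ linearly, so the difference of value functions passes cleanly through the same expectation and the nonlinearity lands outside as $e^{\mathbb{E}_{x_{t+1}} \log F_{t+1}}$. I would also verify that the $\Phi_{C^+}^t$ indicator and the $W_{C^0}^t$ summation are consistent, so no inadmissible action is reintroduced, and note that the terminal base case $F_{C^+,T} = 1$ follows from $V^{soft}_{C,T} = w$; both checks are immediate from the definitions.
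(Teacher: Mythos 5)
Your proposal is correct and follows essentially the same route as the paper's proof: expand $e^{V_{C^+,t}^{soft}}$ via log-sum-exp, use linearity of the $Q$-recursion so the difference of value functions becomes $\mathbb{E}_{x_{t+1}}\log F_{C^+,t+1}(x_{t+1})$ in the exponent, recognize $e^{Q_{C^0,t}^{soft}}/e^{V_{C^0,t}^{soft}}$ as $P_{C^0}(a_t \mid x_t)$, and use $W_{C^+}^t \subset W_{C^0}^t$ to trade the restricted sum for the indicator $\Phi_{C^+}^t$ inside an expectation. The only differences are cosmetic (you introduce the indicator before the $Q$ decomposition rather than at the end, and you add the harmless extra check of the terminal case $F_{C^+,T}=1$), so there is nothing substantive to reconcile.
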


\begin{proof}

  \begin{align*}
    F_{C^+,t}(x_t)
    &=
    \frac{
      e^{
        V_{C^+,t}^{soft}(
          x_{t}
        )
      }
    }{
      e^{
        V_{C^0,t}^{soft}(
          x_{t}
        )
      }
    }
    \\
    &=
    \frac{
      \sum_{a_t \in W_{C^+}^t}
        e^{
          Q_{C,t}^{soft}(
            a_{t},
            x_{t}
          )
        }
    }{
      e^{
        V_{C^0,t}^{soft}(
          x_{t}
        )
      }
    }
    \\
    &=
    \sum_{a_t \in W_{C^+}^t}
      \frac{
        e^{
          Q_{C,t}^{soft}(
            a_{t},
            x_{t}
          )
        }
      }{
        e^{
          V_{C^0,t}^{soft}(
            x_{t}
          )
        }
      }
    \\
    &=
    \sum_{a_t \in W_{C^+}^t}
      \frac{
        e^{
          r(x_t,a_t) +
          \mathbb{E}_{
            x_{t+1}
          }
          V_{C^+,t+1}^{soft}(x_{t+1})
        }
      }{
        e^{
          V_{C^0,t}^{soft}(
            x_{t}
          )
        }
      }
  \end{align*}

  It will be convenient to define the logarithm of our $F_{C,t}$. Let it be $\Delta_{C}^t$:

  \begin{align*}
    \Delta_{C^+}^{t+1}(
      x_{t+1}
    )
    &=
    \log(
      F_{C^+,t+1}(x_{t+1})
    )
    \\
    &=
    \log(
      \frac{
        e^{
          V_{C^+,t+1}^{soft}(
            \timedstate{t+1}
          )
        }
      }{
        e^{
          V_{C^0,t+1}^{soft}(
            \timedstate{t+1}
          )
        }
      }
    )
    \\
    &=
    \log(
      e^{
        V_{C^+,t+1}^{soft}(
          \timedstate{t+1}
        ) -
        V_{C^0,t+1}^{soft}(
          \timedstate{t+1}
        )
      }
    )
    \\
    &=
    V_{C^+,t+1}^{soft}(
      \timedstate{t+1}
    ) -
    V_{C^0,t+1}^{soft}(
      \timedstate{t+1}
    )
  \end{align*}

  Then the ratio can be redefined in terms of previously calculated terms on $C^0$ and our iterating $F_{C,t}$

  \begin{align}
    F_{C^+,t}(x_t)
    &=
    \sum_{a_t \in W_{C^+}^t}
      \frac{
        e^{
          r(x_t,a_t) +
          \mathbb{E}_{
            x_{t+1}
          }
          V_{C^0,t+1}^{soft}(x_{t+1})
        }
      }{
        e^{
          V_{C^0,t}^{soft}(
            x_{t}
          )
        }
      } \nonumber \\ &\hspace{7em}\cdot
      e^{
        \mathbb{E}_{
          x_{t+1}
        }
        \Delta_{C^+}^{t+1}(x_{t+1})
      } \nonumber
    \\ \nonumber
    &=
    \frac{
      \sum_{a_t \in W_{C^+}^t}
        e^{
          Q(x_t,a_t) +
          \mathbb{E}_{
            x_{t+1}
          }
          \Delta_{C^+}^{t+1}(x_{t+1})
        }
    }{
      e^{
        V_{C^0,t}^{soft}(
          x_{t}
        )
      }
    }
    \\ \nonumber
    &=
    \sum_{a_t \in W_{C^+}^t}
      \frac{
        e^{
          Q(x_t,a_t)
        }
      }{
        e^{
          V_{C^0,t}^{soft}(
            x_{t}
          )
        }
      }
      e^{
        \mathbb{E}_{
          x_{t+1}
        }
        \Delta_{C^+}^{t+1}(x_{t+1})
      }
      \\ \nonumber
      &=
      \sum_{a_t \in W_{C^+}^t}
        P_{C^0}(a_t | x_t)
        e^{
          \mathbb{E}_{
            x_{t+1}
          }
          \Delta_{C^+}^{t+1}(x_{t+1})
        }
      \\
      &=
      \mathbb{E}_{a_t \sim P_{C^0}}
        \Phi_{C^+}^t(
          a_t,
          x_t
        )
        e^{
          \mathbb{E}_{
            x_{t+1}
          }
          \Delta_{C^+}^{t+1}(x_{t+1})
        }
  \end{align}

\end{proof}

\section{Algorithm}
Theorem \ref{thm:dynprog} implies that an algorithm can compute the conversion ratios $F_C(x)$ (which will correspond to how much the distribution shrunk by) for all candidate constraints at the same time as the Bellman backup for the baseline set of constraints $C^0$.
The Greedy Iterative Constraint Inference procedure pioneered in \cite{scobee2019maximum} suggests this selection can be performed iteratively adding just one constraint at a time. This iterative approach can be shown to be bounded sub-optimal compared to selecting all the constraints simultaneously \cite{scobee2019maximum}.
In this iterative approach, the $F_{C^+}$ optimizing $C^+$ will become the baseline set of constraints for the next iteraiton $C^i$.

\subsection{Determining Chance Constraint Risk Levels from Demonstrated Transitions}
Corollary \ref{cor:chanceConstraintLevels} states that this set of candidates can be further reduced to only those whose newly added $\psi(x)$ are as exclusive as possible without excluding any of the demonstrations $(\tilde{x}_{0:T},\tilde{a}_{0:T}) \in \mathcal{D}$. That is, when adding state constraints, the newly added exclusion threshold $\psi(x)$ must as low as possible while still being greater than all transition probabilities to $x$ that were chosen by the expert in their demonstrations.
For simplicity, we will bound lowerbound $\psi(x)$ to prevent any precursor states of $x$ from having all its available actions ruled out thereby dooming any trajectory entering that precursor state to necessarily violate the chance constraint on $x$. Therefore this lowerbound $\Psi(x)$ must be defined:

\begin{align}
  \Psi(x') =
  \max_{ \{
    x \mid
      \exists \hat{a}
      \ni
        P(x' |
          x,
          \hat{a}
        )
        > 0
  \} }
    \min_{a}
      P(x' | x, a)
\end{align}

This implies that the new $\psi(x)$ should be:

\begin{align}
  \psi(x) =
  \max \left\{
    \max_{
      (
        \tilde{x}_{0:T},
        \tilde{a}_{0:T}
      )
      \in \mathcal{D}
    }
    \max_{
      t
      \in [0:T-1]
    }
    S(
      \tilde{x}(t),
      \tilde{a}(t),
      x
    ) ,
    \Psi(x)
  \right\}
  \label{eq:riskLevelSelection}
\end{align}

\begin{algorithm}
  \SetAlgoLined
  \KwResult{$V_{C^0,t}$ and a column vector $F$ where each entry corresponds to the $F_{C^+,0}$ that adds one state/action constraint on top of $C^+ \in \mathcal{C^+_0}$}
  \nosemic
    \For{$x \in \mathcal{X}$}{
      $Z(T,x) \leftarrow \exp(w(x))$\;
      $F(T,x) \leftarrow 1$\;
    }
    \For{$t \in [T-1,0]$}{
      \For{$x \in \mathcal{X}$}{
        $Z(t,x) \leftarrow 0$ \;
        $F(t,x) \leftarrow 0$ \;
        \For{$a \in \mathcal{A}$}{
          $Q(t,x,a) \leftarrow r(x,a)$ \;
          $D(t,x,a) \leftarrow 0$ \;
          \For{$x' \in \mathcal{X}$}{
            $Q(t,x,a) += S(x,a,x') \log(Z(t+1,x'))$ \;
            $D(t,x,a) += S(x,a,x') \log(F(t+1,x'))$ \;
          }
          $Z(t,x) += \Phi_{C^0}(x,a) \exp(Q(t,x,a))$ \;
          $F(t,x) += \Phi_{C \in \mathcal{C}}(x,a)$
                    $\exp(Q(t,x,a))$
                    $\exp(D(t,x,a))$ \;
        }
        $F(t,x) =  F(t,x) / Z(t,x)$
      }
    }
  \caption{\label{alg:Fratio}Modified Bellman Backup with Value Ratio}
\end{algorithm}

The most likely constraint is then whichever one still allows the observed demonstrations and has the smallest normalizing constant from the starting state $F_{C,0}(x_0)$.

Note that Algorithm \ref{alg:Fratio} has computations on the order of $O(|\mathcal{X}|^2 (|\mathcal{X}| + |\mathcal{A}|)$, identical to the computational complexity of prior art in maximum likelihood constraint inference \cite{scobee2019maximum}.

\section{Results}

\begin{figure*}
  \centering
  \includegraphics[width=\textwidth]{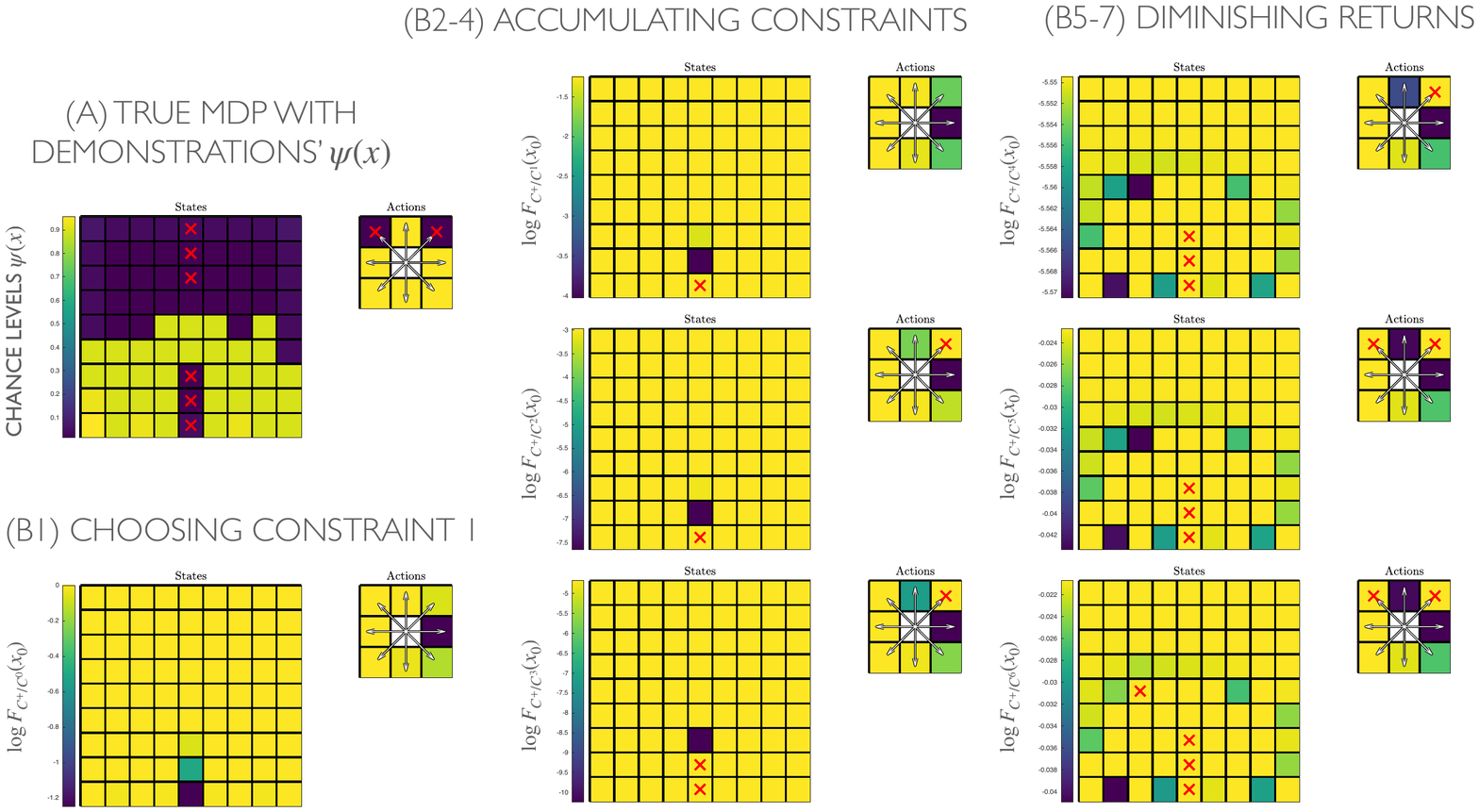}
\caption{\label{fig:demonstration} The constraint inference algorithm was evaluated on a gridworld synthetic dataset with stochastic dynamics. In the panels, X marks represent constraints.
(A) The demonstrator's true set of constraints to be inferred. Cell shading indicates whether that state can be considered as a constraint given the demonstration set. State cell shading represents the $\psi(x)$ allowed by Equation (\ref{eq:riskLevelSelection}). Action cell shading indicates the binary indicator of whether that action was sampled in any of the demonstrations.
(B)
The sequence of inferred constraints alongside the value of adding other candidate constraints in addition.
Cell shading corresponds to how much further partition mass would be eliminated by introducing a constraint on that action or chance constraint on that state (taking the log of the ratio for distinguishability). (B1) The value of choosing the first constraint over nothing, (B2-4) Adding the first three scales up the demonstration likelihood drastically, (B5-7) after the fourth added constraint the continued scaling drops off. The fifth constraint happens to be a true constraint, but the sixth fails to identify the constraints outside the demonstrated area instead misselecting a constraint right on its boundary}
\end{figure*}

Algorithm 1 was implemented in MATLAB and tested on a synthetic dataset of 100 demonstrations.
This dataset was synthesized from simulated trajectories of a stochastically optimal agent minimizing distance traveled on a two-dimensional ``Gridworld'' MDP with movement in all eight compass directions.
These eight directions made up the action space $\mathcal{A}$ along with a loitering terminal action for once the goal was reached.
Each directional action was given a fixed ``slippage'' chance of $0.1$ where a random direction out of the other seven was followed instead.
All ground-truth and candidate state constraints were fixed at a constant chance threshold of $\psi = 0.25$ for all states.

The simulated demonstrator only noisily optimized the task, following a Boltzmann choice distribution as described in Equation (\ref{eq:BoltzmannDef}).
The constraint inference algorithm was evaluated on this dataset as shown in Figure \ref{fig:demonstration}. By the fifth iteration (shown in Figure \ref{fig:demonstration}e), the algorithm suceeded in recovering the groundtruth constraints (shown in Figure \ref{fig:demonstration}a).

\section{Limitations and Future Work}
The algorithms set forth in this paper focused on discretized state and action spaces.
For controlling many systems on practical timescales, the state must be handled as a continuous parameter.
Future work should investigate how gridded state spaces like in Figure \ref{fig:demonstration} could be refined to approximate continuous state spaces.
Reducing the algorithm to a variant Bellman backup, as we did in Theorem \ref{thm:dynprog}, suggests that the continuous variant may just be solving a Hamilton-Jacobi-Bellman equation.
These partial differential equations have a rich literature investigating their solution, including toolsets like \cite{MitchellToolbox}.

Extending constraint inference to stochastic systems raises questions of whether human experts might be better modeled using a prospect-theoretic or risk-sensitive measure as in \cite{mazumdar2017gradient}.
Future work should investigate how human heuristics for statistical prediction might impact the way demonstrations are generated.
The algorithm should be designed to be robust to these biases or even leverage their structure.

\section{Discussion and Conclusion}
By designing the likelihoods to maximize the causal entropy (that respects the information flow of state transition outcome revelation) this work makes maximum likelihood estimation possible for stochastic dynamics that reflect the uncertainties inherent in perilous situations.
Moreover, by broadening the hypothesis class to include chance constraints our algorithm not only learns the constraints from expert operators, but also their risk tolerances.
This opens the door to studying how expert operators plan risk-sensitively and what prospect-theoretic risk measures they may be employing.

Although increasing the complexity of systems that can be handled in constraint inference, this algorithm maintains the same computational complexity of $O(|\mathcal{X}|^2 (|\mathcal{X}| + |\mathcal{A}|)$ as prior art.
That is, control engineers can extract safety specifications from expert demonstration data for the same cost in both stochastic and deterministic dynamics.

\bibliographystyle{plain}
\bibliography{MLCI.bib}

\end{document}